\newtheorem{lemma}{Lemma}
\newtheorem{theorem}{Theorem}
\newtheorem{definition}{Definition}
\newcommand{\todo}[1]{}
\newcommand{\na}[1]{{\color{blue} Nick: #1}}
\newtheorem{example}{Example}
\title{The (Non)-Existence of Stable Mechanisms in Incomplete Information Environments}
\author{Nick Arnosti\thanks{Dept. of Management Science \& Engineering, Stanford University. Work conducted at Microsoft Research.}, Nicole Immorlica \thanks{Microsoft Research.}, Brendan Lucier\footnotemark[2]}
\date{February 16, 2015}
\begin{document}

%\onehalfspacing

\maketitle

\begin{abstract}
We consider two-sided matching markets, and study the incentives of agents to circumvent a centralized clearing house by signing binding contracts with one another. It is well-known that if the clearing house implements a stable match and preferences are known, then no group of agents can profitably deviate in this manner. % by matching outside of the clearing house. %the set of stable matchings coincides with the core, so that when a central clearing house is known to select a stable match, no set of agents can gain by circumventing the \todo{finish here}

We ask whether this property holds even when agents have \emph{incomplete information} about their own preferences or the preferences of others. We find that it does not. In particular, when agents are uncertain about the preferences of others, \emph{every} mechanism is susceptible to deviations by groups of agents. When, in addition, agents are uncertain about their \emph{own} preferences, every mechanism is susceptible to deviations in which a single pair of agents agrees in advance to match to each other.

%Our findings do not depend on agents' risk attitudes, and invite 
%{\em Keywords}: Matching, market design, unraveling
\end{abstract}

\section{Introduction} \vspace{-.1 in}
In entry-level labor markets, a large number of workers, having just completed their training, simultaneously seek jobs at firms.  These markets are especially prone to certain failures, including unraveling, in which workers receive job offers well before they finish their training, and exploding offers, in which job offers have incredibly short expiration dates.  In the medical intern market, for instance, prior to the introduction of the centralized clearing house (the {\it National Residency Matching Program}, or NRMP), medical students received offers for residency programs at US hospitals two years in advance of their employment date~\citep{roth-xing_1994}.  In the market for law clerks, law students have reported receiving exploding offers in which they were asked to accept or reject the position on the spot~\citep{roth-xing_1994}.

In many cases, including the medical intern market in  the United States and United Kingdom and the hiring of law students in Canada, governing agencies try to circumvent these market failures by introducing a centralized clearing house which solicits the preferences of all participants and uses these to recommend a matching \citep{roth_1991}. One main challenge of this approach is that of incentivizing participation.  Should a worker and firm suspect they each prefer the other to their assignment by the clearing house, then they would likely match with each other and not participate in the centralized mechanism.  Perhaps for this reason, clearing houses that fail to select a stable match have often had difficulty attracting participants and been discontinued \citep{roth_1991}. %By selecting a stable matching, the clearing house operator guarantees that \emph{given all the preferences}, no pair or even group of participants will jointly have incentive to deviate from the match.

Empirically, however, even clearing houses which produce stable matches may fail to prevent early contracting. Examples include the market for Canadian law students \citep{roth-xing_1994} and the American gastroenterology match \citep{niederle-roth_2004, mckinney-niederle-roth_2005}. This is perhaps puzzling, as selecting a stable match ensures that no group of participants can profitably circumvent the clearing house ex-post.

Our work offers one possible explanation for this phenomenon. While stable clearing houses ensure that for \emph{fixed, known} preferences, no coalition can profitably deviate, in most natural settings, participants contemplating deviation do so without complete knowledge of others' preferences (and sometimes even their own preferences).  Our main finding is that in the presence of such uncertainty, \emph{no mechanism} can prevent agents from signing mutually beneficial side contracts.

We model uncertainty in preferences by assuming that agents have a common prior over the set of possible preference profiles, and may in addition know their own preferences. %\footnote{This model has been studied in \citet{roth_1989} and \citet{roth-rothblum_1999}, although these papers have a different focus from ours, as discussed in Section \ref{sec:related}.} 
We consider two cases. In one, agents have no private information when contracting, and their decision of whether to sign a side contract depends only on the prior (and the mechanism used by the clearing house). In the second case, agents know their own preferences, but not those of others. When deciding whether to sign a side contract, agents consider their own preferences, along with the information revealed by the willingness (or unwillingness) of fellow agents to sign the proposed contract.

Note that with incomplete preference information, agents perceive the partner that they are assigned by a given mechanism to be a random variable. In order to study incentives for agents to deviate from the centralized clearing house, we must specify a way for agents to compare lotteries over match partners.
%the conditions under which agents will find such a deviation profitable. 
One seemingly natural model is that each agent gets, from each potential partner, a utility from being matched to that partner. When deciding between two uncertain outcomes, agents simply compare their corresponding expected utilities. Much of the previous literature has taken this approach, and indeed, it is straightforward to discover circumstances under which agents would rationally contract early (see Appendix \ref{appendix:insurance}). %\todo{Note that ``insurance" and risk-aversion have been previously identified as sources of early contracting.} 
Such cases are perhaps unsurprising; after all, the central clearing houses that we study solicit only ordinal preference lists, while the competing mechanisms may be designed with agents' cardinal utilities in mind.

For this reason, we consider a purely ordinal notion of what it means for an agent to prefer one allocation to another. In our model, an agent debating between two uncertain outcomes chooses to sign a side contract only if the rank that they assign their partner under the proposed contract strictly first-order stochastically dominates the rank that they anticipate if all agents participate in the clearing house. This is a strong requirement, by which we mean that it is easy for a mechanism to be stable under this definition, relative to a definition relying on expected utility.  For instance, this definition rules out examples of beneficial deviations, such as that given in Appendix \ref{appendix:insurance}, where agents match to an acceptable, if sub-optimal, partner in order to avoid the possibility of a ``bad" outcome.  

Despite the strong requirements we impose on beneficial deviations, we show that every mechanism is vulnerable to side contracts when agents are initially uncertain about their preferences or the preferences of others.  On the other hand, when agents are certain about their own preferences but not about the preferences of others, then there do exist mechanisms that resist the formation of side contracts, when those contracts are limited to involving only a pair of agents (i.e., one from each side of the market).

%The remainder of the paper is organized as follows. In Section \ref{sec:related}

%Especially in this latter case, one might expect some sort of ``no trade" theorem to apply; if an agent on the other side is willing to sign a contract with me, this intuitively indicates that they prefer me to the match that they will receive from the clearing house. If this expectation is correct and the clearing house produces a stable match, it follows that the contract will match me to someone worse than I would get from the clearing house, so I should not sign the contract. This reasoning turns out to be incorrect. \nsi{Actually this intuition is correct since we get that there are no pairwise deviations, right?}

 \vspace{-.1 in}

\section{Related Work} \label{sec:related} \vspace{-.1 in}
\cite{roth_1989} and \cite{roth-rothblum_1999} are among the first papers to model incomplete information in matching markets. These papers focus on the strategic implications of preference uncertainty, meaning that they study the question of whether agents should truthfully report to the clearinghouse. Our work, while it uses a similar preference model, assumes that the clearing house can observe agent preferences.  While this assumption may be realistic in some settings, we adopt it primarily in order to separate the strategic manipulation of matching mechanisms (as studied in the above papers) from the topic of early contracting that is the focus of this work.

%Both our work and that of \cite{roth_1989} ask whether properties that are known to hold in complete information matching games are maintained when participants are uncertain about others' preferences. Although we use the same preference model, the questions that we ask are very different. 

Since the seminal work of \cite{roth-xing_1994}, the relationship between stability and unraveling has been studied using observational studies, laboratory experiments, and a range of theoretical models. Although some work concluded that stability played an important role in encouraging participation \citep{roth_1991, kagel-roth_2000}, other papers note that uncertainty may cause unraveling to occur even if a stable matching mechanism is used.

A common theme in these papers is that unraveling is driven by the motive of ``insurance." For example, the closely related models of  \cite{li-rosen_1998,suen_2000, li-suen_2000, li-suen_2004} study two-sided assignment models with transfers in which binding contracts may be signed in one of two periods (before or after revelation of pertinent information). In each of these papers, unraveling occurs (despite the stability of the second-round matching) because of agents' risk-aversion: when agents are risk-neutral, no early matches form. 

Even in models in which transfers are not possible (and so the notion of risk aversion has no obvious definition), the motive of insurance often drives early matching. The models presented by \cite{roth-xing_1994}, \cite{halaburda_2010}, and \cite{du-livne_2014} assume that agents have underlying cardinal utilities for each match, and compare lotteries over matchings by computing expected utilities. They demonstrate that unraveling may occur if, for example, workers are willing to accept an offer from their second-ranked firm (foregoing a chance to be matched to their top choice) in order to ensure that they do not match to a less-preferred option.\footnote{
In many-to-one settings, \cite{sonmez_1999} demonstrates that even in full-information environments, it may be possible for agents to profitably pre-arrange matches (a follow-up by \cite{afacan_2013} studies the welfare effects of such pre-arrangements). In order for all parties involved to strictly benefit, it must be the case that the firm hires (at least) one inferior worker in order to boost competition for their remaining spots (and thereby receive a worker who they would be otherwise unable to hire). Thus, the profitability of such an arrangement again relies on assumptions about the firm's underlying cardinal utility function.
}

While insurance may play a role in the early contracting observed by \cite{roth-xing_1994}, one contribution of our work is to show that it is not necessary to obtain such behavior. In this work, we show that even if agents are unwilling to forego top choices in order to avoid lower-ranked ones, they might rationally contract early with one another. Put another way, we demonstrate that some opportunities for early contracting may be identified on the basis of ordinal information alone (without making assumptions about agents' unobservable cardinal utilities).

The works of  \cite{manjunath_2013} and \cite{gudmundsson_2014} consider the stochastic dominance notion used in this paper; however they treat only the case (referred to in this paper as ``ex-post") where the preferences of agents are fixed, and the only randomness comes from the assignment mechanism. One contribution of our work is to define a stochastic dominance notion of stability under asymmetric information. This can be somewhat challenging, as agents' actions signal information about their type, which in turn might influence the actions of others.\footnote{The work of \cite{liu-mailath-postlewaite-samuelson_2014} has recently grappled with this inference procedure, and defined a notion of stable matching under uncertainty. Their model differs substantially from the one considered here: it takes a matching $\mu$ as given, and assumes that agents know the quality of their current match, but must make inferences about potential partners to whom they are not currently matched.}

Perhaps the paper that is closest in spirit to ours is that of \cite{peivandi-vohra_2013}, which considers the operation of a centralized exchange in a two-sided setting with transferrable utility. One of their main findings is that every trading mechanism can be blocked by an alternative; our results have a similar flavor, although they are established in a setting with non-transferrable utility.

 \vspace{-.1 in}

\section{Model and Notation} \vspace{-.1 in}

In this section, we introduce our notation, and define what it means for a matching to be \emph{ex-post}, \emph{interim}, or \emph{ex-ante} stable.

There is a (finite, non-empty) set $M$ of men and a (finite, non-empty) set $W$ of women. 

\begin{definition} \text{ }\\
Given $M$ and $W$, a {\bf matching} is a function $\mu : M \cup W \rightarrow M \cup W$ satisfying:
\begin{enumerate}
\item For each $m \in M$, $\mu(m) \in W \cup \{m\}$
\item For each $w \in W$, $\mu(w) \in M \cup \{w\}$
\item For each $m \in M$ and $w \in W$, $\mu(m) = w$ if and only if $\mu(w) = m$.
\end{enumerate}
\end{definition}

We let $\mathcal{M}(M,W)$ be the set of matchings on $M, W$.  \\

Given a set $S$, define $\mathcal{R}(S)$ to be the set of one-to-one functions mapping $S$ onto $\{1, 2, \ldots, |S|\}$. Given $m \in M$, let $P_m \in \mathcal{R}(W \cup \{m\})$ be $m$'s ordinal preference relation over women (and the option of remaining unmatched). Similarly, for $w \in W$, let $P_w \in \mathcal{R}(M \cup \{w\})$ be $w$'s ordinal preference relation over the men. We think of $P_m(w)$ as giving the \emph{rank} that $m$ assigns to $w$; that is, $P_m(w) = 1$ implies that matching to $w$ is $m$'s most-preferred outcome.

\todo{

Say that  $P_a \in \mathcal{R}(S)$

Each agent has a preference ordering over other members.  Mathematically, a preference ordering is a complete transitive relation over a set of alternatives $S$.  
%We next introduce notation for the preferences of the agents.  
Given a set $S$ of alternatives, let $\mathcal{R}(S)$ be the set of complete transitive relations on $S$.  Typically, $S$ will consist of a subset of men and women, plus an additional option of remaining unmatched; that is, $S\subseteq M\cup W\cup\{\emptyset\}$.
%In our setting, we will think of $S\subseteq M\cup W\cup\{\emptyset\}$ as a set of men or women plus the option of remaining unmatched, and a relation as a preference ordering.  
For alternatives $s, s' \in S$, and relation $R\in\mathcal{R}(S)$, we write $sRs'$ to mean $s$ is preferred to $s'$ under relation $R$.  We will often find it necessary to discuss the ``rank'' of an alternative, i.e., the number of alternatives $s'$ that an agent likes less than a given alternative $s$.  
%the given one ``beats.''  To this end, 
To represent this, we overload notation and 
%, given $s \in S$ and $R \in \mathcal{R}(S)$, define 
let $R(s) = | \{s' \in S : s R s'\}|$ to be the number of alternatives of $S$ which are no better than $s$ under $R$.  Note that $R(s) \in \{1, 2, \ldots, |S|\}$.\footnote{Note also that a better alternative therefore has a higher value under $R(\cdot)$.}

Given man $m \in M$, let $R_m \in \mathcal{R}(W \cup \{\emptyset\})$ be his ordinal preference relation over women (and the option of remaining unmatched). Similarly, for $w \in W$, let $R_w \in \mathcal{R}(M \cup \{\emptyset\})$ be her ordinal preference relation over the men.
}

Given sets $M$ and $W$, we let $\mathcal{P}(M,W) =  \prod_{m \in M} \mathcal{R}(W \cup \{m\}) \times \prod_{w \in W} \mathcal{R}(M \cup \{w\})$ be the set of possible preference profiles. We use $P$ to denote an arbitrary element of $\mathcal{P}(M,W)$, and use $\psi$ to denote a probability distribution over $\mathcal{P}(M,W)$.  We use $P_A$ to refer to the preferences of agents in the set $A$ under profile $P$, and use $P_a$ (rather than the more cumbersome $P_{\{a\}}$) to refer to the preferences of agent $a$.

\begin{definition}\label{def:stable}
Given $M$ and $W$, and $P \in \mathcal{P}(M,W)$, we say that matching $\mu$ is {\bf stable at preference profile $P$} if and only if the following conditions hold.
\begin{enumerate}
\item For each $a \in M \cup W$, $P_a(\mu(a)) \leq P_a(a)$. 
\item For each $m \in M$ and $w \in W$ such that $P_m(\mu(m)) > P_m(w)$, we have $P_w(\mu(w)) < P_w(m)$.
\end{enumerate}
\end{definition}

This is the standard notion of stability; the first condition states that agents may only be matched to partners whom they prefer to going unmatched, and the second states that whenever $m$ prefers $w$ to his partner under $\mu$, it must be that $w$ prefers her partner under $\mu$ to $m$.

In what follows, we fix $M$ and $W$, and omit the dependence of $\mathcal{M}$ and $\mathcal{P}$ on the sets $M$ and $W$. We define a \emph{mechanism} to be a (possibly random) mapping $\phi : \mathcal{P} \rightarrow \mathcal{M}$. We use $A'$ to denote a subset of $M \cup W$.
%$M'$ to denote a subset of $M$ and $W'$ to denote a subset of $W$. \na{When do we need/use $M'$ or $W'$, and when do we use $A'$? Do we need to use both?}

We now define what it means for a coalition of agents to \emph{block} the mechanism $\phi$, and what it means for a \emph{mechanism} (rather than a matching) to be stable. Because we wish to consider randomized mechanisms, we must have a way for agents to compare lotteries over outcomes. As mentioned in the introduction, our notion of blocking relates to stochastic dominance. Given random variables $X, Y \in \mathbb{N}$, say that $X$ \emph{first-order stochastically dominates} $Y$ (denoted $X \succ Y$) if for all $n \in \mathbb{N}$, $\Pr(X \leq n) \geq \Pr(Y \leq n)$, with strict inequality for at least one value of $n$.

An astute reader will note that this definition reverses the usual inequalities; that is, $X \succ Y$ implies that $X$ is ``smaller" than $Y$. We adopt this convention because below, $X$ and $Y$ will represent the ranks assigned by each agent to their partner (where the most preferred option has a rank of one), and thus by our convention, $X \succ Y$ means that $X$ is preferred to $Y$.

\begin{definition}[Ex-Post Stability]  \label{def:expost}
Given $M, W$ and a profile $P \in \mathcal{P}(M,W)$, coalition $A'$ {\bf blocks mechanism $\phi$ ex-post} at $P$ if there exists a  mechanism $\phi'$ such that for each $a \in A'$,  \vspace{-.05 in}
\begin{enumerate}
\item $\Pr(\phi'(P)(a) \in A' )=1$, and \vspace{-.05 in}
\item $P_a(\phi'(P)(a)) \succ P_a(\phi(P)(a))$.  \vspace{-.05 in}
\end{enumerate}
%mechanism $\phi'$ such that for each $a \in A'$, 
%\begin{enumerate}
%\item $\phi'(P)(a) \in A' \cup \{\emptyset\}$ with probability one, and
%\item $R_a(\phi'(P)(a)) \succ R_a(\phi(P)(a))$. 
%\end{enumerate}
Mechanism $\phi$ is {\bf ex-post stable at profile $P$} if no coalition of agents blocks $\phi$ ex-post at $P$. \\
Mechanism $\phi$ is {\bf ex-post stable} if it is ex-post stable at $P$ for all $P \in \mathcal{P}(M,W)$. \\
Mechanism $\phi$ is {\bf ex-post pairwise stable} if for all $P$, no coalition consisting of at most one man and at most one woman blocks $\phi$ ex post at $P$.
\end{definition}

Note that in the above setting, because $P$ is fixed, the mechanism $\phi'$ is really just a random matching.  The first condition in the definition requires that the deviating agents can implement this alternative (random) matching without the cooperation of the other agents; the second condition requires that for each agent, the random variable denoting the rank of his partner under the alternative $\phi'$ stochastically dominates the rank of his partner under the original mechanism.  

%\begin{definition}
%\end{definition}

Note that if the mechanism $\phi$ is deterministic, then it is ex-post pairwise stable if and only if the matching it produces is stable in the sense of Definition~\ref{def:stable}.

%\begin{definition}
%A mechanism $\phi$ is {\bf ex-post pairwise stable} if, for every $M, W$ and $P \in \mathcal{P}(M,W)$, there exists no coalition $(M',W') \subseteq M \times W$ with $|M'| = |W'| = 1$ that blocks $\phi$ ex-post.
%\end{definition}

The above notions of blocking and stability are concerned only with cases where the preference profile $P$ is fixed. In this paper, we assume that at the time of choosing between mechanisms $\phi$ and $\phi'$, agents have incomplete information about the profile $P$ that will eventually be realized (and used to implement a matching). 
We model this incomplete information by assuming that it is common knowledge that $P$ is drawn from a prior $\psi$ over $\mathcal{P}$.  Given a mechanism $\phi$, each agent may use $\psi$ to determine the ex-ante distribution of the rank of the partner that they will be assigned by $\phi$. This allows us to define what it means for a coalition to block $\phi$ ex-ante, and for a mechanism $\phi$ to be ex-ante stable.

\todo{Tweak notation? Or at least comment that randomness here is over both $\psi$, $\phi$ ($P$ and $\mu$), whereas with ex-post stability it is over only $\phi$.}
\begin{definition}[Ex-Ante Stability] \label{def:exante}
Given $M, W$ and a prior $\psi$ over $\mathcal{P}(M,W)$, coalition $A'$ {\bf blocks mechanism $\phi$ ex-ante} at $\psi$ if there exists a mechanism $\phi'$ such that  if $P$ is drawn from the prior $\psi$, then for each $a \in A'$, \vspace{-.05 in} 
\begin{enumerate}
\item $\Pr(\phi'(P)(a) \in A' )=1$, and \vspace{-.05 in} 
\item $P_a(\phi'(P)(a)) \succ P_a(\phi(P)(a))$. \vspace{-.05 in} 
\end{enumerate}
%and $R_w(m) \succ R_w(\phi(P)(w))$. 
%\todo{Include in the definition that $\psi$ must be a product of marginal distributions?}
%\end{definition}
%\begin{definition}
%Given $M, W$ and a prior $\psi$ over $\mathcal{P}(M,W)$, a pair $(m,w) \in M \times W$ {\bf blocks mechanism $\phi$ ex-ante} if $R_m(w) \succ R_m(\phi(P)(m))$ and $R_w(m) \succ R_w(\phi(P)(w))$. \todo{Include in the definition that $\psi$ must be a product of marginal distributions?}
%\end{definition}
%\begin{definition}
Mechanism $\phi$ is {\bf ex-ante stable at prior $\psi$} if no coalition of agents blocks $\phi$ ex-ante at $\psi$. \\
Mechanism $\phi$ is {\bf ex-ante stable} if it is ex-ante stable at $\psi$ for all priors $\psi$. \\
Mechanism $\phi$ is {\bf ex-ante pairwise stable} if, for all priors $\psi$, no coalition consisting of at most one man and at most one woman blocks $\phi$ ex-ante at $\psi$.
\end{definition}

Note that the only difference between ex-ante and ex-post stability is that the randomness in Definition \ref{def:exante} is over both the realized profile $P$ and the matching produced by $\phi$, whereas in Definition \ref{def:expost}, the profile $P$ is deterministic. Put another way, the mechanism $\phi$ is ex-post stable if and only if it is ex-ante stable at all deterministic distributions $\psi$.
%\begin{definition}
%A mechanism $\phi$ is {\bf ex-ante pairwise stable} if, for every $M,W$ and distribution $\psi$ over $\mathcal{P}(M,W)$, there exists no coalition $(M',W')$ with $|M'| = |W'| = 1$ that blocks $\phi$ ex-ante.
%\end{definition}

The notions of ex-ante and ex-post stability defined above are fairly straightforward because the information available to each agent is identical. In order to study the case where each agent knows his or her own preferences but not the preferences of others, we must define an appropriate notion of a blocking coalition. In particular, if man $m$ decides to enter into a contract with woman $w$, $m$ knows not only his own preferences, but also learns about those of $w$ from the fact that she is willing to sign the contract. Our definition of what it means for a coalition to block $\phi$ in the interim takes this into account.

In words, given the common prior $\psi$, we say that a coalition $A' $ \emph{blocks $\phi$ in the interim} if there exists a preference profile $P$ that occurs with positive probability under $\psi$ such that when preferences are $P$, all members of $A'$ agree that the outcome of $\phi'$ stochastically dominates that of $\phi$, given their own preferences and the fact that other members of $A'$ also prefer $\phi'$. We formally define this concept below, where we use the notation $\psi( \cdot )$ to represent the probability measure assigned by the distribution $\psi$ to the argument.

\begin{definition}[Interim Stability]
\label{def:interim}
Given $M, W$, and a prior $\psi$ over $\mathcal{P}(M,W)$, coalition $A'$ {\bf blocks mechanism $\phi$ in the interim} if there exists a mechanism $\phi'$, and for each $a \in A'$, a subset of preferences $\mathcal{R}_a$ satisfying the following:
\begin{enumerate}
%\item For each $m \in M'$, $\mathcal{R}_m \subseteq \mathcal{R}(W \cup \{\emptyset\})$; for each $w \in W'$, $\mathcal{R}_w \subseteq \mathcal{R}(M \cup \{\emptyset\})$.
\item For each $P \in \mathcal{P}$,  $\Pr(\phi'(P)(a) \in A') = 1$.   \vspace{-.05 in} %Moreover, $\phi'(P)(a)$ depends only on the preferences of the agents in $A'$. 
\item For each agent $a \in A'$ and each preference profile $\tilde{P}_a$, $\tilde{P}_a \in \mathcal{R}_a$ if and only if \vspace{-.05 in}
\begin{enumerate}
\item $\psi(Y_a(\tilde{P}_a)) > 0$, where $Y_a(\tilde{P}_a) = \{P \colon P_a = \tilde{P}_a\} \cap \{ P \colon P_{a'} \in \mathcal{R}_{a'}\  \forall a' \in A'  \backslash\{a\}\}$
\item When $P$ is drawn from the conditional distribution of $\psi$ given $Y_a(\tilde{P}_a)$, we have
%\item $\psi(\{P \colon P_a = R\} \cap \{ P \colon P_{a'} \in \mathcal{R}_{a'}\  \forall a' \in A'  \backslash\{a\}\}) > 0$, and  \vspace{-.05 in}
%\item When $P$ is drawn from distribution $\psi$ restricted to the set of profiles $\{P \colon P_a = R\} \cap \{ P \colon P_{a'} \in \mathcal{R}_{a'}\  \forall a' \in A' \}$, we have 
$P_a(\phi'(P)(a)) \succ P_a(\phi(P)(a)) $.
\end{enumerate}
\end{enumerate}
Mechanism $\phi$ is {\bf interim stable at $\psi$} if no coalition of agents blocks $\phi$ in the interim at $\psi$.\\
Mechanism $\phi$ is {\bf interim stable} if it is interim stable at $\psi$ for all distributions $\psi$. \\
Mechanism $\phi$ is {\bf interim pairwise stable} if, for all priors $\psi$, no coalition consisting of at most one man and at most one woman blocks $\phi$ in the interim at $\psi$.
\end{definition}
To motivate the above definition of an interim blocking coalition, consider a game in which a moderator approaches a subset $A'$ of agents, and asks each whether they would prefer to be matched according to the mechanism $\phi$ (proposed by the central clearing house) or the alternative $\phi'$ (which matches agents in $A'$ to each other). Only if all agents agree that they would prefer $\phi'$ is this mechanism used.  Condition 1 simply states that the mechanism $\phi'$ generates matchings among the (potentially) deviating coalition $A'$.

We think of $\mathcal{R}_a$ as being a set of preferences for which agent $a$ agrees to use mechanism $\phi'$. The set $Y_a(\tilde{P}_a)$ is the set of profiles which agent $a$ considers possible, conditioned on the events $P_a = \tilde{P}_a$ and the fact that all other agents in $A'$ agree to use mechanism $\phi'$.
%Condition 1 simply states that this set of preference profiles is well-defined.  
%, and the matching selected depends only on their reports. 
Condition 2 is a consistency condition on the preference subsets $\mathcal{R}_a$: 
% and $Y_a(\tilde{P}_a)$ as being the set of profiles which are \emph{consistent} with the observations of agent $a$, meaning that $Y_a(\tilde{P}_a)$ consists of profiles for which $P_a = \tilde{P}_a$ and all other agents in $A'$ agree to use mechanism $\phi'$.
 2a) states that agents in $A'$ should agree to $\phi'$ only if they believe that there is a chance that the other agents in $A'$ will also agree to $\phi'$ (that is, if $\psi$ assigns positive mass to $Y_a$); moreover, 2b) states that in the cases when $P_a \in \mathcal{R}_a$ \emph{and the other agents select} $\phi'$, it should be the case that $a$ ``prefers" the mechanism $\phi'$ to $\phi$ (here and in the remainder of the paper, when we write that agent $a$ prefers $\phi'$ to $\phi$, we mean that \emph{given the information available to $a$}, the rank of $a$'s partner under $\phi'$ stochastically dominates the rank of $a$'s partner under $\phi$).

\todo{Maybe make it more explicit where in above defn the agents reason about the fact that the others are willing participants?}

\todo{Add example for this, i.e., a mechanism and dist and deviation showing that the mech is not interim stable.  Need to think what would make an interesting example in that regard.}

We now move on to our main results.

%\begin{definition}
%A mechanism $\phi$ is {\bf interim pairwise stable} if, for every $M,W$ and distribution $\psi$ over $\mathcal{P}(M,W)$, there exists no coalition $(M',W')$ with $|M'| = |W'| = 1$ that blocks $\phi$ in the interim.
%\end{definition}

%\newpage

 \vspace{-.1 in}

\section{Results} \vspace{-.1 in}

We begin with the following observation, which states that the three notions of stability discussed above are comparable, in that ex-ante stability is a stronger requirement than interim stability, which is in turn a stronger requirement than ex-post stability.

\begin{lemma} \label{lem:nesting} \text{ } \\
If $\phi$ is ex-ante (pairwise) stable, then it is interim (pairwise) stable. \\ If $\phi$ is interim (pairwise) stable, then it is ex-post (pairwise) stable.
\end{lemma}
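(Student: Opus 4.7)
The plan is to establish each implication by contrapositive; the pairwise versions follow automatically since the constructions preserve $|A'|$.

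For the interim-implies-ex-post direction, suppose some coalition $A'$ blocks $\phi$ ex-post at a profile $P^*$ via an alternative mechanism $\phi'_0$. I would take $\psi$ to be the point mass on $P^*$ and set $\mathcal{R}_a = \{P^*_a\}$ for every $a \in A'$. Extend $\phi'_0$ to a full mechanism $\phi'$ by, for $P \neq P^*$, matching each agent in $A'$ to itself, so that condition 1 of interim blocking holds globally. The ``iff'' in condition 2 then reduces to a case check: for $\tilde{P}_a = P^*_a$ we have $Y_a(P^*_a) \ni P^*$, so 2a holds, and conditioning $\psi$ on $Y_a(P^*_a)$ just returns the point mass on $P^*$, so 2b becomes the given ex-post blocking inequality; for $\tilde{P}_a \neq P^*_a$, $Y_a(\tilde{P}_a)$ misses $P^*$, so 2a fails and $\tilde{P}_a \notin \mathcal{R}_a$ is consistent with the ``only if'' direction.

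For the ex-ante-implies-interim direction, suppose $(A', \phi', \{\mathcal{R}_a\}_{a \in A'})$ interim-blocks $\phi$ at $\psi$. Picking any $a_0 \in A'$ and any $\tilde{P}_{a_0} \in \mathcal{R}_{a_0}$, condition 2a guarantees $\psi(Y_{a_0}(\tilde{P}_{a_0})) > 0$, so in particular the event $E = \{P : P_a \in \mathcal{R}_a \text{ for all } a \in A'\}$ satisfies $\psi(E) > 0$. Let $\psi' = \psi(\cdot \mid E)$. I claim $A'$ blocks $\phi$ ex-ante at $\psi'$ using the same $\phi'$. Condition 1 is inherited verbatim. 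For condition 2, fix $a \in A'$ and partition the support of $\psi'$ according to the value of $P_a$, which ranges over $\mathcal{R}_a$. For each $\tilde{P}_a \in \mathcal{R}_a$, conditioning $\psi'$ on $P_a = \tilde{P}_a$ yields exactly the law $\psi(\cdot \mid Y_a(\tilde{P}_a))$, under which interim 2b gives the strict dominance $P_a(\phi'(P)(a)) \succ P_a(\phi(P)(a))$. Stochastic dominance is preserved under averaging over a partition, and a strict inequality at some $n$ in any single positive-mass cell survives the averaging, so the unconditional dominance inequality holds under $\psi'$.

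The main obstacle I anticipate is bookkeeping around the ``iff'' in the interim definition: ensuring the reverse direction is met in part 2, and ensuring that the conditional distributions in part 1 are well-defined and that strict dominance transfers from the conditional to the averaged distribution. Both issues are resolved by the positive-probability clause 2a that is built into the interim definition.
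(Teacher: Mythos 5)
Your proof is correct and follows essentially the same route as the paper's: both implications are argued by contrapositive, using a point-mass prior for the interim-implies-ex-post step and conditioning on the event that every member of $A'$ has preferences in $\mathcal{R}_a$ for the ex-ante-implies-interim step. Your write-up is merely more explicit than the paper's about verifying the ``if and only if'' in the interim definition and about why strict stochastic dominance survives the averaging over the cells $Y_a(\tilde{P}_a)$.
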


\begin{proof}
We argue the contrapositive in both cases. Suppose that $\phi$ is not ex-post stable. This implies that there exists a preference profile $P$, a coalition $A'$, and a mechanism $\phi'$ that only matches agents in $A'$ to each other, such that all agents in $A'$ prefer $\phi'$ to $\phi$, given $P$. If we take $\psi$ to place all of its mass on profile $P$, then (trivially) $A'$ also blocks $\phi$ in the interim, proving that $\phi$ is not interim stable. 

Suppose now that $\phi$ is not interim stable. This implies that there exists a distribution $\psi$ over $\mathcal{P}$, a coalition $A'$, a mechanism $\phi'$ that only matches agents in $A'$ to each other, and preference orderings $\mathcal{R}_a$ satisfying the following conditions: the set of profiles $Y=\{P:\forall a\in A', P_a\in\mathcal{R}_a\}$ has positive mass $\psi \left(Y\right)>0$; and conditioned on the profile being in $Y$, agents in $A'$ want to switch to $\phi'$ , i.e., for all $a\in A'$ and {\it for all} $P_a \in \mathcal{R}_a$ agent $a$ prefers $\phi'$ to $\phi$ conditioned on the profile being in $Y$.  Thus, agent $a$ must prefer $\phi'$ even ex ante (conditioned only on $P \in Y$).

If we take $\psi'$ to be the conditional distribution of $\psi$ given $P \in Y$, it follows that under $\psi'$, all agents $a \in A'$ prefer mechanism $\phi'$ to mechanism $\phi$ ex-ante, so $\phi$ is not ex-ante stable.
\end{proof}

\subsection{Ex-post Stability}
We now consider each of our three notions of stability in turn, beginning with ex-post stability.  By Lemma \ref{lem:nesting}, ex-post stability is the easiest of the three conditions to satisfy.  Indeed, we show there not only exist ex-post stable mechanisms, but that any mechanism that commits to always returning a stable matching is ex-post stable.

\begin{theorem} \text{ } \label{thm:expost} \\
Any mechanism that produces a stable matching with certainty is ex-post stable. 
\end{theorem}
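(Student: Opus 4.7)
The plan is to argue by contrapositive: assume some coalition $A'$ blocks $\phi$ ex-post at $P$ via a mechanism $\phi'$, and derive a contradiction with the hypothesis that every realization of $\phi(P)$ is stable. The heart of the argument will be to convert the distributional condition (strict stochastic dominance on each agent's rank) into a pointwise statement about a specific realization $\mu \in \mathrm{supp}(\phi(P))$, then exploit stability of $\mu$ to produce the contradiction.

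I would first dispatch the case where $\phi'$ is deterministic, say $\phi'(P) = \mu'$. The strict SD condition $P_a(\mu'(a)) \succ P_a(\phi(P)(a))$ then unpacks into two facts for each $a \in A'$: (i) every realization $\mu \in \mathrm{supp}(\phi(P))$ satisfies $P_a(\mu(a)) \geq P_a(\mu'(a))$, since for any $k < P_a(\mu'(a))$ strict SD forces $\Pr(P_a(\phi(P)(a)) \leq k) = 0$; and (ii) some $\mu^* \in \mathrm{supp}(\phi(P))$ together with some $a^* \in A'$ satisfies $P_{a^*}(\mu^*(a^*)) > P_{a^*}(\mu'(a^*))$, coming from the strict-inequality clause at $k = P_{a^*}(\mu'(a^*))$.

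Fixing such a stable $\mu^*$ and agent $a^*$, I would derive the contradiction as follows. If $\mu'(a^*) = a^*$, then $a^*$ strictly prefers being unmatched to $\mu^*(a^*)$, violating individual rationality of the stable $\mu^*$. Otherwise $\mu'(a^*) = b^* \in A' \setminus \{a^*\}$; without loss of generality, $a^* = m \in M$ and $b^* = w \in W$. Then $P_m(w) < P_m(\mu^*(m))$, and stability of $\mu^*$ forces $P_w(\mu^*(w)) < P_w(m)$, since otherwise $(m, w)$ would block $\mu^*$ (the borderline case $\mu^*(w) = m$ would give $\mu^*(m) = w$, contradicting the strict improvement for $m$). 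But property (i) applied to $w$ gives $P_w(\mu^*(w)) \geq P_w(\mu'(w)) = P_w(m)$, contradicting $P_w(\mu^*(w)) < P_w(m)$.

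The main obstacle is extending the argument to randomized $\phi'$. For a generic $\mu' \in \mathrm{supp}(\phi'(P))$, property (i) above can fail: some realization of $\phi$ might give $a$ a strictly better rank than $\mu'(a)$. My plan would be to reduce to the deterministic case by selecting a realization $\mu' \in \mathrm{supp}(\phi'(P))$ that itself strictly stochastically dominates $\phi$ for every $a \in A'$ simultaneously. The univariate analogue (the best-rank realization of $\phi'$ for a single agent works) is immediate, but coordinating the choice across all of $A'$ is the subtle step, and would likely use a careful selection argument within $\mathrm{supp}(\phi'(P))$ or invoke additional structure of the stable-matching lattice. Failing a clean reduction, one could try an averaging/expected-rank argument to produce extremal realizations $\mu \in \mathrm{supp}(\phi)$ and $\mu' \in \mathrm{supp}(\phi')$ to which the pairwise blocking-pair argument above still applies.
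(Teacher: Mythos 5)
Your deterministic case is right and essentially complete: unpacking strict stochastic dominance against a point mass into the pointwise bound (i) plus the existence of a strictly worse realization (ii), and then running the individual-rationality / blocking-pair dichotomy on a stable realization $\mu^*$, is a valid argument, and the application of (i) to $w$ is legitimate because condition 1 of the blocking definition forces $w=\mu'(a^*)\in A'$. (For reference, the paper supplies no proof of this theorem at all; it attributes the result to Proposition 3 of Manjunath (2013).)

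The randomized case is a genuine gap, and your primary plan for closing it fails. There need not exist any single realization $\mu'\in\mathrm{supp}(\phi'(P))$ that stochastically dominates $\phi(P)$ for every $a\in A'$ simultaneously: if $\phi'$ mixes $\{m_1w_1,m_2w_2\}$ with $\{m_1w_2,m_2w_1\}$ and both men rank $w_1$ above $w_2$, each realization hands one of the two men his worse option, so no realization even weakly dominates $\phi'$ itself for both men, let alone dominates $\phi$. The stable-matching lattice cannot help here, since the obstruction lives entirely inside $\mathrm{supp}(\phi'(P))$, over which you have no control. What does work is the averaging idea you mention last, made precise in two steps. First, a per-realization lemma: for any stable $\mu$ and any matching $\mu'$ of $A'$ within itself, each $a\in A'$ with $P_a(\mu'(a))<P_a(\mu(a))$ satisfies $\mu'(a)\neq a$ (individual rationality of $\mu$) and, by stability of $\mu$, the partner $\mu'(a)$ strictly prefers $\mu$ to $\mu'$; hence $a\mapsto\mu'(a)$ injects the strict gainers into the strict losers, so in \emph{every} joint realization the gainers do not outnumber the losers. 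Second, draw $\mu\sim\phi(P)$ and $\mu'\sim\phi'(P)$ independently and sum over $a\in A'$: for independent integer-valued $X,Y$ with $X$ strictly first-order dominating $Y$ one has $\Pr(X<Y)>\Pr(X>Y)$, via the identity $\Pr(X<Y)-\Pr(X>Y)=\sum_n\bigl(F_X(n)-F_Y(n)\bigr)\bigl(\Pr(Y=n)+\Pr(Y=n+1)\bigr)$, which is nonnegative term by term and strictly positive at the largest $n$ with $F_X(n)>F_Y(n)$. Summing over the coalition makes the expected number of strict gainers exceed the expected number of strict losers, contradicting the per-realization injection. Without an aggregate argument of this kind, your proof covers only deterministic side contracts $\phi'$.
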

Note that if the mechanism $\phi$ is deterministic, then (trivially) it is ex-post stable if and only if it always produces a stable matching. Thus, for deterministic mechanisms, our notion of ex-post stability coincides with the ``standard" definition of a stable mechanism.  Theorem \ref{thm:expost} states further that any mechanism that randomizes among stable matchings is also ex-post stable. This fact appears as Proposition 3 in \citep{manjunath_2013}.\footnote{We thank an anonymous reviewer for the reference.}

%We defer to Appendix \ref{appendix:expost} the proof that any mechanism that randomizes among stable matchings is also ex-post stable. 
We next show in Example \ref{ex:notnecessary} that the converse of Theorem \ref{thm:expost} does not hold.  That is, there exist randomized mechanisms $\phi$ which sometimes select unstable matches but are nevertheless ex-post stable. In this and other examples, we use the notation $P_m : w_1, w_2, w_3$ as shorthand indicating that $m$ ranks $w_1$ first, $w_2$ second, $w_3$ third, and considers going unmatched to be the least desirable outcome.

%\begin{remark} \text{ }
%\begin{enumerate}
%\item The result is immediate if the mechanism $\phi$ is deterministic.
%\item Example \ref{ex:notnecessary} shows that producing a stable matching is not a necessary condition for a mechanism to be ex-post (or interim) stable. 
%\end{enumerate}
%\end{remark}

\begin{example} \label{ex:notnecessary}\[\begin{array}{l l l l}
P_{m_1} &: w_1 , w_2 , w_3 \hspace{.6 in} P_{w_1} &: m_3 , m_2 , m_1  \\
P_{m_2} &:  w_1 , w_3 , w_2 \hspace{.6 in} P_{w_2} &: m_2 , m_1 , m_3 \\
P_{m_3} &: w_2 , w_1 , w_3 \hspace{.6 in} P_{w_3} &: m_3 , m_2 , m_1
\end{array}\]
%\[\begin{array}{l l l l}
%P_{m_1} &: w_1 \succ w_2 \succ w_3 \hspace{.6 in} P_{w_1} &: m_3 \succ m_2 \succ m_1  \\
%P_{m_2} &:  w_1 \succ w_3 \succ w_2 \hspace{.6 in} P_{w_2} &: m_2 \succ m_1 \succ m_3 \\
%P_{m_3} &: w_2 \succ w_1 \succ w_3 \hspace{.6 in} P_{w_3} &: m_3 \succ m_2 \succ m_1
%\end{array}\]
%\begin{align*}
%w_1 R_{m_1} w_2 R_{m_1} w_3 & \hspace{.6 in} m_3 R_{w_1} m_2 R_{w_1} m_1  \\
% w_1 R_{m_2} w_3 R_{m_2} w_2 & \hspace{.6 in} m_2 R_{w_2} m_1 R_{w_2} m_3 \\
%w_2 R_{m_3} w_1 R_{m_3} w_3 & \hspace{.6 in} m_3 R_{w_3} m_2 R_{w_3} m_1
%\end{align*}
There is a unique stable match, given by $\{m_1w_2, m_2w_3, m_3w_1\}$.
\end{example}

\begin{lemma}
For the market described in Example \ref{ex:notnecessary}, no coalition blocks the mechanism that outputs a uniform random matching.
\end{lemma}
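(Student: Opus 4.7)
Plan: Under the uniform random matching (the uniform distribution over the six perfect matchings on $M \cup W$), symmetry gives each agent a partner distributed uniformly over the opposite sex, so each agent's rank-of-partner distribution is uniform on $\{1, 2, 3\}$. Suppose, toward contradiction, that a coalition $A'$ blocks with some alternative mechanism $\phi'$. Strict first-order stochastic dominance requires $\Pr(\mathrm{rank} \leq 3) \geq 1$, ruling out any positive probability of being unmatched (which would give rank $4$), so $A'$ must contain equal numbers of men and women. It also requires $\Pr(\mathrm{rank} = 1) \geq 1/3 > 0$, which forces each $a \in A'$ to have its top choice inside $A'$.

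Reading off the preferences, the top-choice map sends $m_1, m_2 \mapsto w_1$, $m_3 \mapsto w_2$, $w_1, w_3 \mapsto m_3$, and $w_2 \mapsto m_2$. This contains the $4$-cycle $m_3 \to w_2 \to m_2 \to w_1 \to m_3$, with $m_1$ and $w_3$ pointing into the cycle from outside. The only gender-balanced coalitions closed under this relation are therefore $A_1 = \{m_2, m_3, w_1, w_2\}$ and $A_2 = M \cup W$, which I would address in turn.

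For $A_1$, the only two possible perfect matchings are $\mu_1 = \{m_2 w_1, m_3 w_2\}$ and $\mu_2 = \{m_2 w_2, m_3 w_1\}$. Writing $\phi' = p\,\mu_1 + (1-p)\,\mu_2$, agent $m_2$ (who has rank $3$ under $\mu_2$) demands $p \geq 2/3$ from the rank-$\leq 2$ inequality, while $w_2$ (rank $3$ under $\mu_1$) demands $p \leq 1/3$; these are incompatible. For $A_2$, represent $\phi'$ by a doubly stochastic matrix $(p_{ij})$ with $p_{ij} = \Pr(m_i \mapsto w_j)$. The top-choice constraints give $p_{11}, p_{21}, p_{31} \geq 1/3$, which combined with the column-$w_1$ sum of $1$ force all three to equal $1/3$. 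The rank-$\leq 2$ constraint for $m_1$ then yields $p_{12} \geq 1/3$, and similarly $p_{22}, p_{32} \geq 1/3$ arise from $w_2$ and $m_3$; the column-$w_2$ sum pins each of these at $1/3$. Row sums then force $p_{i3} = 1/3$ for every $i$. Hence the only feasible $\phi'$ has the same rank marginals as $\phi$, giving equality at every level of the CDF and no strict stochastic dominance.

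The main obstacle is the $A_2$ case: one must carefully verify that the chain of $\geq$ inequalities implied by stochastic dominance, combined with the row- and column-sum constraints of a doubly stochastic matrix, is rigid enough to pin down the uniform distribution exactly, leaving no slack for a genuine improvement. The rest of the argument is a short case analysis driven by the closure and balance conditions forced in the first paragraph.
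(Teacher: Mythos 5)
Your proof is correct and follows essentially the same route as the paper's: reduce to the two candidate coalitions $\{m_2,m_3,w_1,w_2\}$ and all six agents via the ``top choice must lie in the coalition'' and gender-balance observations, then rule each out by inequality chasing. The only cosmetic difference is in the six-agent case, where you pin down the $3\times 3$ matrix of match marginals to the uniform doubly stochastic matrix rather than solving for the weights on the six matchings as the paper does; this is a clean equivalent, since rank distributions depend only on those marginals.
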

\begin{proof}
Because the random matching gives each agent their first choice with positive probability, if agent $a$ is in a blocking coalition, then it must be that the agent that $a$ most prefers is also in this coalition. Furthermore, any blocking mechanism must always match all participants, and thus any blocking coalition must have an equal number of men and women. Thus, the only possible blocking coalitions are $\{m_2, m_3, w_1, w_2\}$ or all six agents. The first coalition cannot block; if the probability that $m_2$ and $w_2$ are matched exceeds $1/3$, $m_2$ will not participate. If the probability that $m_3$ and $w_2$ are matched exceeds $1/3$, then $w_2$ will not participate. But at least one of these quantities must be at least $1/2$. 

Considering a mechanism that all agents participate in, for any set of weights on the six possible matchings, we can explicitly write inequalities saying that each agent must get their first choice with probability at least $1/3$, and their last with probability at most $1/3$. Solving these inequalities indicates that any random matching $\mu$ that (weakly) dominates a uniform random matching must satisfy 
\[\Pr(\mu = \{m_1w_1,m_2w_2,m_3w_3\}) = \Pr( \mu = \{m_1w_2,m_2w_3,m_3w_1\}) = \Pr(\mu = \{m_1w_3,m_2w_1,m_3w_2\}),\] 
\[\Pr(\mu = \{m_1w_1,m_2w_3,m_3w_2\}) = \Pr(\mu = \{m_1w_2,m_2w_1,m_3w_3\}) = \Pr(\mu = \{m_1w_3,m_2w_2,m_3w_1\}).\]
But any such mechanism gives each agent their first, second and third choices with equal probability, and thus does not strictly dominate the uniform random matching.
\end{proof}

Finally, the following lemma establishes a simple necessary condition for ex-post incentive compatibility.  This condition will be useful for establishing non-existence of stable outcomes under other notions of stability.

\begin{lemma} \text{ } \\\label{lem:oneone}
If mechanism $\phi$ is ex-post pairwise stable, then if man $m$ and woman $w$ rank each other first under $P$, it follows that $\Pr(\phi(P)(m)=w) = 1$.
\end{lemma}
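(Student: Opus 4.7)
The plan is to prove the contrapositive: if $\Pr(\phi(P)(m) = w) < 1$ under a profile $P$ in which $m$ and $w$ rank each other first, then the pair $\{m, w\}$ forms an ex-post blocking coalition, contradicting ex-post pairwise stability.

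The construction of the alternative mechanism $\phi'$ is immediate: let $\phi'$ deterministically output the matching in which $m$ is paired with $w$ (the assignment of the remaining agents is irrelevant to the argument, but can be taken arbitrarily, e.g., leaving all others unmatched). By construction, Condition 1 of Definition~\ref{def:expost} is satisfied for both $m$ and $w$, since each is matched to a member of $A' = \{m,w\}$ with probability one.

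The main verification is Condition 2, namely that for $a \in \{m,w\}$, the rank $P_a(\phi'(P)(a))$ strictly first-order stochastically dominates $P_a(\phi(P)(a))$. Since $m$ ranks $w$ first, under $\phi'$ the random variable $P_m(\phi'(P)(m))$ is deterministically equal to $1$, so $\Pr(P_m(\phi'(P)(m)) \leq n) = 1$ for every $n \geq 1$. Meanwhile, $\Pr(P_m(\phi(P)(m)) \leq 1) = \Pr(\phi(P)(m) = w) < 1$ by hypothesis, since $w$ is the unique woman of rank $1$ for $m$. Hence the inequality in the definition of stochastic dominance holds for all $n$ and is strict at $n=1$, giving $P_m(\phi'(P)(m)) \succ P_m(\phi(P)(m))$. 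The argument for $w$ is symmetric.

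There is no real obstacle here; the only subtle point is to remember that the stochastic-dominance convention in the paper puts small ranks on top (so rank $1$ is best), which is exactly what makes the deterministic assignment to the mutually top-ranked partner dominate any lottery that sometimes assigns a different partner. Thus $\{m,w\}$ ex-post blocks $\phi$ at $P$, which contradicts ex-post pairwise stability and completes the contrapositive.
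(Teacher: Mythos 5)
Your proposal is correct and is essentially the paper's own argument, just spelled out in full: the paper's proof is the one-line observation that $m$ and $w$ can deviate and match to each other and both strictly benefit, and your verification of the stochastic-dominance condition (strict at $n=1$, weak everywhere else) is the precise content of that claim. No gaps.
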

\begin{proof}
This follows immediately: if $\phi(P)$ matches $m$ and $w$ with probability less than one, then $m$ and $w$ can deviate and match to each other, and both strictly benefit from doing so.
%We argue the contrapositive Suppose that mechanism $\phi$ is such that for some $m \in M, w \in W$, and profile of preferences $P \in \mathcal{P}(M,W)$ such that $m$ and $w$ rank each other first, $\phi_P$ matches $m$ and $w$ with probability less than one. Then %if $\psi$ places all of its mass on profile $P$, $m$ and $w$ can deviate and match to each other, and both strictly benefit from doing so.
\end{proof}

\subsection{Interim Stability}

The fact that a mechanism which (on fixed input) outputs a uniform random matching is ex-post stable suggests that our notion of a blocking coalition, which relies on ordinal stochastic dominance, is very strict, and that many mechanisms may in fact be stable under this definition even with incomplete information. We show in this section that this intuition is incorrect: despite the strictness of our definition of a blocking coalition, it turns out that \emph{no} mechanism is interim stable.

\begin{theorem} \text{ } \label{thm:interim} \\
No mechanism is interim stable.
\end{theorem}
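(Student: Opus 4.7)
The plan is to fix an arbitrary mechanism $\phi$ and construct a prior $\psi$ together with a blocking coalition $A'$ that interim-blocks $\phi$ at $\psi$. By Lemma~\ref{lem:nesting} we may assume without loss of generality that $\phi$ is ex-post pairwise stable (otherwise the ex-post block, lifted to a point-mass prior, already certifies that $\phi$ is not interim stable). Under this assumption, Lemma~\ref{lem:oneone} pins down $\phi$'s output on every profile containing mutual-first pairs, which is the key lever for the construction.

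The construction will be on a small instance (two or three agents per side), with $\psi$ supported on a handful of profiles. The design principle is to mix two kinds of profiles: (i) profiles in which mutual-first structure forces $\phi$ to deliver a specific matching, and (ii) profiles in which $\phi$ retains genuine freedom (no mutual-first pairs, so several stable matchings are possible). Preferences will be correlated across profiles so that the agents' own types only partially distinguish profiles; this way, the conditioning event ``all coalition members are willing to participate'' carries non-trivial information and can concentrate the posterior of each member on the profile(s) where $\phi$ performs badly for them.

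Next I would pick a coalition $A'$ and an alternative mechanism $\phi'$. Because the abstract already previews that with private information but shared prior over others' preferences the necessary block may require more than a single pair (a pairwise block suffices only when agents are also uncertain about their own type), I would look first at coalitions of three or more agents --- and, if needed, the grand coalition $A'=M\cup W$. For each $a\in A'$ the set $\mathcal{R}_a$ is chosen to collect precisely those types of $a$ for which the following holds: after conditioning on $P_a=\tilde P_a$ and on every other $a'\in A'$ being of a type in $\mathcal{R}_{a'}$, the rank distribution of $a$'s partner under $\phi$ is strictly first-order stochastically dominated by the rank distribution under $\phi'$. Operationally one sets up a system of strict inequalities (one for each $(a,\tilde P_a)$ in the proposed $\mathcal{R}_a$, coming from comparing the cumulative distribution functions of $a$'s partner's rank at each threshold) and chooses the $\mathcal{R}_a$'s and the randomization inside $\phi'$ to make the system consistent.

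I expect the main obstacle to be simultaneously securing strict dominance for \emph{every} coalition member. A candidate $\phi'$ that strictly improves one agent's partner typically worsens another's, and within profiles pinned down by Lemma~\ref{lem:oneone} the agents constrained to mutual-first partners already receive rank $1$ and so cannot strictly gain. The construction therefore has to route each coalition member's conditional posterior onto a profile in class (ii), where $\phi$'s freedom is exploited, while keeping the Lemma-forced profiles of class (i) only to calibrate beliefs (not to appear inside any single member's $Y_a$). Balancing these requirements is the delicate part; once a concrete prior and coalition satisfying all the inequalities is exhibited, Conditions 1 and 2 of Definition~\ref{def:interim} follow by direct verification.
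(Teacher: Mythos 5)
Your plan gets the right scaffolding --- reduce to ex-post (pairwise) stable mechanisms via Lemma~\ref{lem:nesting}, then use Lemma~\ref{lem:oneone} to pin down $\phi$ on profiles with mutual-first pairs --- but it stops short of a construction, and the design principle you propose for closing the gap points in the wrong direction. You insist that the Lemma-forced profiles should ``only calibrate beliefs'' and that the actual gain must be realized on profiles where $\phi$ retains freedom. The paper does the opposite, and for good reason: on profiles where $\phi$ is free, an adversarial $\phi$ can already randomize so as to be undominated (the paper's uniform-random-matching lemma is exactly such an instance), so there is no reliable leverage there. The leverage is precisely that Lemma~\ref{lem:oneone} \emph{forces} $\phi$ into an inefficient stable match on the pinned-down profiles, and the block happens \emph{on} those profiles.

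The missing idea is a symmetry argument that resolves the obstacle you correctly flag (agents matched to their mutual first cannot strictly gain ex post). Take three men and three women with preferences iid uniform, and a profile $P$ in which $m_1,w_1$ and $m_3,w_3$ are mutual firsts while $m_2,w_2$ each rank their forced partner last; Lemma~\ref{lem:oneone} forces $\phi(P)=\{m_1w_1,m_2w_2,m_3w_3\}$, giving four agents rank $1$ and two agents rank $3$, and equivariance under relabeling forces the same on every permutation of $P$. Let $\phi'$ instead output $\{m_1w_2,m_2w_1,m_3w_3\}$ (suitably permuted), giving four agents rank $1$ and two agents rank $2$, and let $\phi'=\phi$ off these profiles. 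Because the prior is iid uniform, each agent --- \emph{even after observing her own preferences} --- is equally likely to occupy each role in the pattern, so conditional on the profile being a permutation of $P$ her rank lottery under $\phi$ is $(2/3$ first, $1/3$ third$)$ versus $(2/3$ first, $1/3$ second$)$ under $\phi'$: strict first-order dominance for the grand coalition with $\mathcal{R}_a$ equal to all types. No ex-post strict gain for every agent is needed; the interim posterior over roles does the work. Without this (or an equivalent) symmetrization, your system of strict inequalities will not be satisfiable, since some coalition member will be pinned at rank $1$ under $\phi$ in every profile of her conditioning event.
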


\begin{proof}
In the proof, we refer to \emph{permutations} of a given preference profile $P$, which informally are preference profiles that are equivalent to $P$ after a relabeling of agents. Formally, given a permutation $\sigma$ on  the set $M \cup W$ which satisfies $\sigma(M) = M$ and $\sigma(W) = W$, we say that $P'$ is the {\bf permutation of $P$ obtained by $\sigma$} if for all $a  \in M \cup W$ and $a'$ in the domain of $P_a$, it holds that $P_a(a') = P'_{\sigma(a)}(\sigma(a'))$.

The proof of Theorem \ref{thm:interim} uses the following example.
\begin{example} \label{ex:interim} 
Suppose that each agent's preferences are iid uniform over the other side, and consider the following preference profile, which we denote $P$:
%\[\begin{array}{ l l l l}
%P_{m_1} : & w_1 \succ w_2 \succ w_3 & \hspace{.6 in} P_{w_1} :& m_1 \succ m_2 \succ m_3 \\
%P_{m_2}: & w_1 \succ w_3 \succ w_2 & \hspace{.6 in} P_{w_2} :& m_1\succ m_3 \succ m_2 \\
%P_{m_3} : & w_3 \succ w_1 \succ w_2 &\hspace{.6 in} P_{w_3} :& m_3 \succ m_1 \succ m_2
%\end{array}\]
\[\begin{array}{ l l l l}
P_{m_1} : & w_1 , w_2 , w_3 & \hspace{.6 in} P_{w_1} :& m_1 , m_2 , m_3 \\
P_{m_2}: & w_1 , w_3 , w_2 & \hspace{.6 in} P_{w_2} :& m_1, m_3 , m_2 \\
P_{m_3} : & w_3 , w_1 , w_2 &\hspace{.6 in} P_{w_3} :& m_3 , m_1 , m_2
\end{array}\]
%\[\begin{array}{ l l l l}
%A : & 1 \succ 2 \succ 3 & \hspace{.6 in} 1 :& A \succ B \succ C \\
%B : & 1 \succ 3 \succ 2 & \hspace{.6 in}2 :& A \succ C \succ B \\
%C : & 3 \succ 1 \succ 2 &\hspace{.6 in} 3 :& C \succ A \succ B
%\end{array}\]
\end{example}
Note that under profile $P$, $m_1$ and $w_1$ rank each other first, as do $m_3$ and $w_3$. By Lemma \ref{lem:nesting}, if $\phi$ is interim stable, it must be ex-post stable. By Lemma \ref{lem:oneone}, given this $P$, any ex-post stable mechanism must produce the match $\{m_1w_1, m_2w_2, m_3w_3\}$ with certainty. Furthermore, if preference profile $P'$ is a permutation of $P$, then %,\footnote{Formally, we say that $P'$ is a permutation of $P$ if there exists a permutation $\sigma : M \cup W \rightarrow M \cup W$ such that $\sigma(M) = M, \sigma(W) = W$, and for all $a, a', a'' \in M \cup W$, it holds that $\sigma(a') P'_{\sigma(a)} \sigma(a'') \Leftrightarrow a' P_a a''$.}
% In English, $P$ and $P'$ are permutations of one another if $P'$ can be reached from $P$ by successively swapping agents
%\begin{itemize}
%\item $R_m(w) = R'_{\sigma(m)}(\sigma'(w))$
%$\sigma'(w) P'_{\sigma(m)} \sigma'(w') \leftrightarrow w P_m w'$
%$P_{m} = P'_{\sigma(m)} \,\,\, \forall m \in M$
%\end{itemize}
the matching $\phi(P')$ must simply permute $\{m_1w_1,m_2w_2,m_3w_3\}$ accordingly. Thus, on any permutation of $P$, $\phi$ gives four agents their first choices, and two agents their third choices.

Define the mechanism $\phi'$ as follows: 
\begin{itemize}
\item If $P'$ is the permutation of $P$ obtained by $\sigma$, then \vspace{-.1 in}
\[\phi'(P') = \{\sigma(m_1)\sigma(w_2), \sigma(m_2)\sigma(w_1), \sigma(m_3) \sigma(w_3)\}. \vspace{-.15 in} \]
\item On any profile that is not a permutation of $P$, $\phi'$ mimics $\phi$. %\vspace{-.1 in} 
%\item On profile $P$, $\phi'$ selects $\{m_1w_2,m_2w_1,m_3w_3\}$ (and analogously for permutations of $P$).
\end{itemize}
Note that on profile $P$, $\phi'$ gives four agents their first choices, and two agents their second choices. If each agent's preferences are iid uniform over the other side, then
%Given a stable mechanism $\phi$, consider the following alternative mechanism $\phi'$:
%\begin{itemize}
%\item On profile $P$, $\phi'$ selects $\{m_1w_2,m_2w_1,m_3w_3\}$ (and analogously for permutations of $P$).
%\item On any profile that is not a permutation of $P$, $\phi'$ mimics $\phi$.
%\end{itemize}
each agent considers his or herself equally likely to play each role in the profile $P$ (by symmetry, this is true even after agents observe their own preferences, as they know nothing about the preferences of others). Thus, conditioned on the preference profile being a permutation of $P$, all agents' interim expected allocation under $\phi$ offers a $2/3$ chance of getting their first choice and a $1/3$ chance of getting  their third choice, while their interim allocation under $\phi'$ offers a $2/3$ chance of getting their first choice and a $1/3$ chance of getting their second choice. Because $\phi'$ and $\phi$ are identical on profiles which are not permutations of $P$, it follows that all agents strictly prefer $\phi'$ to $\phi$ ex-ante.
\end{proof}

The intuition behind the above example is as follows. Stable matchings may be ``inefficient", meaning that it might be possible to separate a stable partnership $(m_1,w_1)$ at little cost to $m_1$ and $w_1$, while providing large gains to their new partners (say $m_2$ and $w_2$). When agents lack the information necessary to determine whether they are likely to play the role of $m_1$ or $m_2$, they will gladly go along with the more efficient (though ex-post unstable) mechanism.

Note that in addition to proving that no mechanism is interim stable \emph{for all priors}, Example \ref{ex:interim} demonstrates that when the priori $\psi$ is (canonically) taken to be uniform on $\mathcal{P}$, there exists no mechanism which is interim stable \emph{at the prior $\psi$} (this follows because if $\phi$ sometimes fails to match pairs who rank each other first, then such pairs have a strict incentive to deviate; if $\phi$ always matches mutual first choices, then all agents prefer to deviate to the mechanism $\phi'$ described above).

Although Theorem \ref{thm:interim} establishes that it is impossible to design a mechanism $\phi$ that eliminates profitable deviations, note that the deviating coalition in Example \ref{ex:interim} involves six agents, and the contract $\phi'$ is fairly complex.
%As demonstrated in Example \ref{ex:interim}, such a deviation may involve a relatively complex contract $\phi'$ among a relatively large number of agents (six in the above example).
In many settings, such coordinated action may seem implausible, and one might ask whether there exist mechanisms that are at least immune to deviations by \emph{pairs} of agents. The following theorem shows that the complexity of Example \ref{ex:interim} is necessary: any mechanism that always produces a stable match is indeed interim pairwise stable.\footnote{
This result relies crucially on the fact that we're using the notion of stochastic dominance to determine blocking pairs. If agents instead evaluate lotteries over matches by computing expected utilities, it is easy to construct examples where two agents rank each other second, and both prefer matching with certainty to the risk of getting a lower-ranked alternative from $\phi$ (see Appendix \ref{appendix:insurance}).
}

\begin{lemma} \text{ } \label{thm:interimpairwise} \\
Any mechanism which produces a stable match with certainty is interim pairwise stable.
\end{lemma}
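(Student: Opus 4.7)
The plan is a proof by contradiction. Suppose a pair $\{m,w\}$ blocks $\phi$ in the interim via an alternative mechanism $\phi'$ and preference subsets $\mathcal{R}_m,\mathcal{R}_w$. The structural simplification that drives the argument is that the coalition has only two members, so $\phi'(P)(m)\in\{m,w\}$ (and symmetrically for $w$): either they match each other, or each stays unmatched. Combined with individual rationality of any stable matching, this restricts the CDF of $\tilde{P}_m(\phi'(P)(m))$ to at most two jumps, and the aim is to show that strict stochastic dominance cannot hold simultaneously for $m$ and $w$.

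First, fix any $\tilde{P}_m\in\mathcal{R}_m$ and analyze condition 2b) for $m$. Because $\phi$ is stable, $\tilde{P}_m(\phi(P)(m))\le\tilde{P}_m(m)$ almost surely on $Y_m(\tilde{P}_m)$. A short CDF-comparison argument then shows that one must have $\tilde{P}_m(w)<\tilde{P}_m(m)$ (else strict stochastic dominance of the two-valued random variable $\tilde{P}_m(\phi'(P)(m))$ fails immediately) and that there is a positive-probability event $E_m\subseteq Y_m(\tilde{P}_m)$ on which $\phi'(P)(m)=w$ and $\tilde{P}_m(\phi(P)(m))>\tilde{P}_m(w)$. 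Stability of $\phi(P)$ (Definition~\ref{def:stable}) then implies $P_w(\phi(P)(w))<P_w(m)$ on $E_m$: since $m$ strictly prefers $w$ to his $\phi$-match, $w$ must strictly prefer her $\phi$-match to $m$.

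Next, because $\mathcal{R}_w$ is finite and $E_m\subseteq\{P_w\in\mathcal{R}_w\}$, some $\tilde{P}_w^{\ast}\in\mathcal{R}_w$ satisfies $\psi(E_m\cap\{P_w=\tilde{P}_w^{\ast}\})>0$, and this is a positive-measure subset of $Y_w(\tilde{P}_w^{\ast})$ on which $\tilde{P}_w^{\ast}(\phi(P)(w))<\tilde{P}_w^{\ast}(m)$. Now apply 2b) to $\tilde{P}_w^{\ast}$ and split on whether $w$ finds $m$ acceptable. In the acceptable case $\tilde{P}_w^{\ast}(m)<\tilde{P}_w^{\ast}(w)$: because $\tilde{P}_w^{\ast}(\phi'(P)(w))\in\{\tilde{P}_w^{\ast}(m),\tilde{P}_w^{\ast}(w)\}\ge\tilde{P}_w^{\ast}(m)$, the CDF inequality at threshold $n=\tilde{P}_w^{\ast}(m)-1$ reads $0\ge\Pr(\tilde{P}_w^{\ast}(\phi(P)(w))<\tilde{P}_w^{\ast}(m)\mid Y_w(\tilde{P}_w^{\ast}))>0$, a contradiction. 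In the unacceptable case $\tilde{P}_w^{\ast}(w)<\tilde{P}_w^{\ast}(m)$: individual rationality of $\phi$ gives $\tilde{P}_w^{\ast}(\phi(P)(w))\le\tilde{P}_w^{\ast}(w)$ everywhere on $Y_w(\tilde{P}_w^{\ast})$, while $\tilde{P}_w^{\ast}(\phi'(P)(w))\ge\tilde{P}_w^{\ast}(w)$ always, so $\phi'$ is pointwise weakly worse than $\phi$ for $w$, which rules out strict stochastic dominance.

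The main obstacle is careful bookkeeping between the two conditioning events $Y_m(\tilde{P}_m)$ and $Y_w(\tilde{P}_w^{\ast})$, together with unpacking strict stochastic dominance into statements compatible with individual rationality of $\phi$. The two-agent coalition is what keeps this tractable: for a larger coalition $\phi'(P)(m)$ would range over a much richer set, the single-threshold CDF comparisons used above would no longer suffice, and indeed Theorem~\ref{thm:interim} shows the result fails in that regime.
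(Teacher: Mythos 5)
Your proof is correct and takes essentially the same route as the paper's: the key step in both is that stability of $\phi(P)$ forces $w$ to strictly prefer her $\phi$-partner to $m$ exactly on the positive-probability event where $m$ strictly gains from the side contract, so the two dominance requirements cannot hold simultaneously. You merely make explicit the bookkeeping (the single-threshold CDF comparisons, the individual-rationality cases, and the transfer from $m$'s conditioning event $Y_m(\tilde{P}_m)$ to $w$'s event $Y_w(\tilde{P}_w^{\ast})$) that the paper's short argument leaves implicit.
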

\begin{proof}
Seeking a contradiction, suppose that $\phi$ always produces a stable match. Fix a man $m$, and a woman $w$ with whom he might block $\phi$ in the interim.
%and that there exists a prior $\psi$ and a preference profile $P$ in the support of $\psi$ such that the pair $(m,w)$ block $\phi$ in the interim. 
Note that $m$ must prefer $w$ to going unmatched; otherwise, no deviation with $w$ can strictly benefit him. Thus, the best outcome (for $m$) from a contract with $w$ is that they are matched with certainty. According to the definition of an interim blocking pair, $m$ must believe that receiving $w$ with certainty stochastically dominates the outcome of $\phi$; that is to say, $m$ must be certain that $\phi$ will give him nobody better than $w$. Because $\phi$ produces a stable match, it follows that in cases where $m$ chooses to contract with $w$, $\phi$ always assigns to $w$ a partner that she (weakly) prefers to $m$, and thus she will not participate.
%Because $\phi$ ensures that both $m$ and $w$ do no worse than going unmatched, and both agents must strictly benefit from the deviation, it must be that both $m$ and $w$ find the other acceptable. Thus, whenever $m$ and $w$ contract with one another, it is without loss of generality to assume that 
\end{proof}

\subsection{Ex-ante Stability}
In some settings, it is natural to model agents as being uncertain not only about the rankings of others, but also about their own preferences. One might hope that the result of Theorem \ref{thm:interimpairwise} extends to this setting; that is, that if $\phi$ produces a stable match with certainty, it remains immune to pairwise deviations ex-ante. Theorem \ref{thm:exante} states that this is not the case: ex-ante, no mechanism is even pairwise stable.  %This of course also implies that no mechanism is ex-ante stable.

\begin{theorem} \label{thm:exante} \text{ } \\
No mechanism is ex-ante pairwise stable.
\end{theorem}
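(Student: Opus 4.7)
The plan is to exploit Lemma~\ref{lem:nesting} to reduce to the case where $\phi$ is ex-post pairwise stable (if $\phi$ already fails ex-post pairwise stability there is a profile $P$ and pair blocking ex-post, and placing all mass of $\psi$ on $P$ immediately gives an ex-ante block). Assuming ex-post pairwise stability, Lemma~\ref{lem:oneone} says any mutual first pair in a profile is matched by $\phi$ with probability one, which I will use to pin down $\phi$ entirely on the two profiles in a carefully chosen prior.

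I would set up a 3-man, 3-woman market with $\psi$ placing equal mass on two profiles $P^D$ and $P^E$, designed symmetrically so that Lemma~\ref{lem:oneone} determines the matching on each. In $P^D$ take the mutual first pairs to be $(m_2, w_1)$ and $(m_3, w_2)$, and specify $m_1 : w_1, w_2, w_3$ and $w_1 : m_2, m_1, m_3$; then $\phi(P^D)$ must equal $\{m_2 w_1, m_3 w_2, m_1 w_3\}$, so $m_1$ receives his rank-$3$ partner while $w_1$ receives her rank-$1$ partner. In $P^E$ take the mutual first pairs to be $(m_1, w_2)$ and $(m_2, w_3)$, and specify $m_1 : w_2, w_1, w_3$ and $w_1 : m_1, m_2, m_3$; then $\phi(P^E) = \{m_1 w_2, m_2 w_3, m_3 w_1\}$, so the roles are reversed and $w_1$ gets rank $3$ while $m_1$ gets rank $1$. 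The preferences of the remaining agents (and the remaining positions in the specified lists) can be filled in consistently; they are constrained only by the stated mutual firsts, and stability of the forced matching is easy to verify directly.

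Under $\psi$, the marginal rank distribution of $\phi$ for each of $m_1$ and $w_1$ places probability $1/2$ on rank $1$ and $1/2$ on rank $3$, while the contract $(m_1, w_1)$ gives each of them rank $1$ with probability $1/2$ (on the profile where they rank each other first) and rank $2$ with probability $1/2$ (on the other). The uniform distribution on $\{1,2\}$ strictly stochastically dominates the uniform distribution on $\{1,3\}$—they agree at $k=1$ and the contract strictly wins at $k=2$—so $(m_1, w_1)$ blocks $\phi$ ex-ante at $\psi$, contradicting ex-ante pairwise stability. The main obstacle is engineering the two profiles so that the mutual-first constraints force $\phi$ to give each of $m_1$ and $w_1$ their worst match in one of the profiles (opening the dominance gap at $k=2$) while keeping $m_1$'s rank of $w_1$ and $w_1$'s rank of $m_1$ in their top two positions in every profile (so the contract distribution stays confined to $\{1,2\}$); the symmetric roles of $m_1$ and $w_1$ across the two profiles are what make a single pair, rather than a larger coalition, enough to block.
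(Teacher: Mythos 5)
Your argument is correct and follows essentially the same route as the paper: reduce to the case where $\phi$ is ex-post pairwise stable, use Lemma~\ref{lem:oneone} to pin down the matching on each profile in the support of the prior, and then show that $m_1$ and $w_1$ can hedge their uncertainty about which of them will be the ``loser'' of the stable match by contracting, converting a rank-$3$ outcome into a rank-$2$ outcome while preserving the probability of rank $1$. The only difference is cosmetic: you use a two-point prior that swaps the roles of $m_1$ and $w_1$ (which avoids the paper's algebra with the parameter $p$), and you should just make explicit that the leftover pair in each profile (e.g.\ $m_1$ and $w_3$ in $P^D$) is also forced to match, since otherwise that pair would block ex post.
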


\begin{proof}
The proof of Theorem \ref{thm:exante} uses the following example.
\begin{example} \label{ex:exante}
Suppose that there are three men and three women, and fix $p \in (0,1/4)$. The prior $\psi$ is that preferences are drawn independently as follows: \\

$\begin{array}{l l}
P_{m_1} = \left\{ \begin{array}{l l}  
w_1 , w_3 , w_2  & w.p. \,\, 1-2p \\
 w_2 , w_1 , w_3 & w.p. \,\, p\\
w_3 , w_2 , w_1 & w.p. \,\, p\\
\end{array} \right.
&
P_{w_1} = \left\{ \begin{array}{l l}  
m_1 , m_3 , m_2  & w.p. \,\, 1-2p \\
 m_2 , m_1 , m_3 & w.p. \,\, p\\
m_3 , m_2 , m_1 & w.p. \,\, p\\
\end{array} \right.  \\ 
P_{m_2} =  \hspace{.21 in} w_1 , w_2 & P_{w_2} = \hspace{.21 in} m_1,  m_2\\
P_{m_3} =  \hspace{.21 in} w_3 & P_{w_3} = \hspace{.21 in} m_3
\end{array}$ \\ \\
%$\begin{array}{l l}
%P_{m_1} = \left\{ \begin{array}{l l}  
%w_1 \succ w_3 \succ w_2  & w.p. \,\, 1-2p \\
% w_2 \succ w_1 \succ w_3 & w.p. \,\, p\\
%w_3 \succ w_2 \succ w_1 & w.p. \,\, p\\
%\end{array} \right.
%&
%P_{w_1} = \left\{ \begin{array}{l l}  
%m_1 \succ m_3 \succ m_2  & w.p. \,\, 1-2p \\
% m_2 \succ m_1 \succ m_3 & w.p. \,\, p\\
%m_3 \succ m_2 \succ m_1 & w.p. \,\, p\\
%\end{array} \right.  \\ 
%P_{m_2} =  \hspace{.21 in} w_1 \succ m_2 & P_{w_2} = \hspace{.21 in} m_1 \succ m_2\\
%P_{m_3} =  \hspace{.21 in} w_3 & P_{w_3} = \hspace{.21 in} m_3
%\end{array}$ \\ \\
%For each possible draw of preferences for $m_1$ and $w_1$, there is a unique stable match. Furthermore, ex-ante both $m_1$ and $w_1$ strictly prefer matching to each other to accepting the stable outcome.
\end{example}

Because $m_3$ and $w_3$ always rank each other first, we know by Lemmas \ref{lem:nesting} and \ref{lem:oneone} that if mechanism $\phi$ is ex-ante pairwise stable, it matches $m_3$ and $w_3$ with certainty. Applying Lemma \ref{lem:oneone} to the submarket $(\{m_1,m_2\},\{w_1,w_2\})$, we conclude that
\begin{enumerate}
\item Whenever $m_1$ prefers $w_2$ to $w_1$, $\phi$ must match $m_1$ with $w_2$ (and $m_2$ with $w_1$) with certainty. \vspace{-.3 in}
\item Whenever $w_1$ prefers $m_2$ to $m_1$, $\phi$ must match $w_1$ with $m_2$ (and $m_1$ with $w_2$) with certainty. \vspace{-.3 in}
\item Whenever $m_1$ prefers $w_1$ to $w_2$ and $w_1$ prefers $m_1$ to $m_2$, $\phi$ must match $m_1$ with $w_1$. 
\end{enumerate}
After doing the relevant algebra, we see that $w_1$ and $m_1$ each get their first choice with probability $1 - 3p + 4p^2$, their second choice with probability $p$, and their third choice with probability $2p - 4p^2$. If $w_1$ and $m_1$ were to match to each other, they would get their first choice with probability $1 - 2p$, their second with probability $p$, and their third with probability $p$; an outcome that they both prefer. It follows that $\phi$ is not ex-ante pairwise stable, completing the proof.
\end{proof}

The basic intuition for Example \ref{ex:exante} is similar to that of Example \ref{ex:interim}. When $m_1$ ranks $w_1$ first and $w_1$ does not return the favor, it is unstable for them to match and $m_1$ will receive his third choice. In this case, it would (informally) be more ``efficient" (considering only the welfare of $m_1$ and $w_1$) to match $m_1$ with $w_1$; doing so improves the ranking that $m_1$ assigns his partner by two positions, while only lowering the ranking that $w_1$ assigns her partner by one. Because men and women play symmetric roles in the above example, ex-ante, both $m_1$ and $w_1$ prefer the more efficient solution in which they always match to each other.

 \vspace{-.1 in}

\section{Discussion}  \vspace{-.1 in}

In this paper, we extended the notion of stability to settings in which agents are uncertain about their own preferences and/or the preferences of others. We observed that when agents can sign contracts before preferences are fully known, every matching mechanism is susceptible to unraveling. While past work has reached conclusions which sound similar, we argue that our results are stronger in several ways.

First, previous results have assumed that agents are expected utility maximizers, and relied on particular assumptions about the utilities that agents get from each potential partner. Our work uses the stronger notion of stochastic dominance to determine blocking coalitions, and notes that there may exist opportunities for profitable circumvention of a central matching mechanism even when agents are unwilling to sacrifice the chance of a terrific match in order to avoid a poor one.
%may be identifiable on the basis of purely ordinal information.

Second, not only can every mechanism be blocked under \emph{some} prior, %only must any stable mechanism be prior-dependent, %is it the case that no mechanism prevents unraveling for all priors, \emph{even if the designer knows the prior} 
but also, for some priors, it is impossible to design a mechanism that is interim stable \emph{at that prior}. This striking conclusion is similar to that of \citet{peivandi-vohra_2013}, who find (in a bilateral transferable utility setting) that for some priors over agent types, every potential mechanism of trade can be blocked.

In light of the above findings, one might naturally ask how it is that many centralized clearing houses have managed to persist. %(in spite of agents' incentives to match early)?
One possible explanation is that the problematic priors are in some way ``unnatural" and unlikely to arise in practice. We argue that this is not the case: Example \ref{ex:interim} shows that blocking coalitions exist when agent preferences are independent and maximally uncertain, Example \ref{ex:exante} shows that they may exist even when the preferences of most agents are known, and in Appendix \ref{appendix:example} we show that they may exist even when one side has perfectly correlated (i.e. ex-post identical) preferences. 

A more plausible explanation for the persistence of centralized clearing houses is that although mutually profitable early contracting opportunities may exist, agents lack the ability to identify and/or act on them. To take one example, even when profitable early contracting opportunities can be identified, agents may lack the ability to write binding contracts with one another (whereas our work assumes that they possess such commitment power). We leave a more complete discussion of the reasons that stable matching mechanisms might persist in some cases and fail in others to future work.

\newpage

\bibliographystyle{apalike}
\begin{spacing}{0.8}
\bibliography{../UnravelingBibliography}  
\end{spacing}
\appendix

\section{Interim Pairwise (In)Stability} \label{appendix:insurance}
The following example shows that Theorem \ref{thm:interimpairwise} depends on our stochastic dominance notion of a blocking pair; if agents compare lotteries by computing expected utilities, then pairs of agents might benefit from circumventing a mechanism that always produces a stable match.

\begin{example}
There are three agents on each side. Men $m_2$ and $m_3$ are known to rank women in the order $w_1, w_2, w_3$; $m_1$ has this preference with probability $1-p$, and with probability $p$ ranks $w_3$ first. Symmetrically, women $w_2$ and $w_3$ are known to rank men in the order $m_1,  m_2, m_3$; $w_1$ has this preference with probability $1-p$, and with probability $p$ ranks $m_3$ first.
%$\{m_1w_3, m_2w_1, m_3,w_2\}$. 
%
%There is a ``typical preference" of $1 \succ 2 \succ 3$ and $A \succ B \succ C$. Agents $2,3,B,C$ are all certain to have typical preferences. Agents $1$ and $A$ are likely to have typical preferences, but with some positive probability $1$ ranks $C$ first, and with some positive probability $A$ ranks $3$ first. When $1$ ranks $C$ first and $A$ has typical preferences, the matching is $\{A2,B3,C1\}$. When $A$ ranks $3$ first and $1$ has typical preferences, the matching is $\{A3,B1,C2\}$. If $B$ and $2$ dislike $3$ and $C$ enough, they will find it profitable (in expectation) to contract early with each other.
\end{example}

 For any realization, there is a unique stable match; note that when $m_1$ ranks $w_3$ first and $w_1$ ranks $m_1$ first and $m_3$ last, this match gives $w_2$ her least-preferred partner, $m_3$. Under a stable matching mechanism, both $m_2$ and $w_2$ get their first choice with probability $p(1-p)$, their second choice with probability $(1-p)^2+p^2$, and their third choice with probability $p(1-p)$. So long as their utility from their second choice is above their average utility from a lottery over their first and third choices, $m_2$ and $w_2$ prefer matching with one another to the outcome of the stable matching.

%\section{Proof of Theorem \ref{thm:expost}} \label{appendix:expost}
%
%\todo{Write this.}
%\na{A case with multiple stable matchings; possibly useful to visualize proof that randomizing over stable matchings is ex-post stable.}
%\begin{example}
%\[\begin{array}{l l l l}
%R_A :&  2 \succ 1 \succ 3& \hspace{.6 in} R_1 :&   B \succ A \succ C \\
%R_B :& 3 \succ 1 \succ 2 &\hspace{.6 in} R_2 :& C \succ A \succ B\\
%R_D:& 1 \succ 2 \succ 3 & \hspace{.6 in} R_4: & A \succ B \succ C
%\end{array}\]
%\end{example}
%Each of the matchings $\{A1, B3, C2\}, \{A2, B3, C1\}, \{A3, B1,C2\}$ is stable. If we randomize equally over them, we get a kind of pretty picture.

\section{Perfectly Correlated Preferences} \label{appendix:example} 

Theorem \ref{thm:exante} demonstrates that a stable matching mechanism may be blocked ex-ante by a coalition when preferences are drawn independently and uniformly at random. 

The following example considers an opposite extreme extreme, where one side has identical preferences ex-post. It demonstrates that even in this case, it may be possible for a coalition to profitably deviate ex-ante from a mechanism that always selects the unique stable matching. 

In this appendix, we use the language of ``schools" and ``students," and assume that schools all rank students according to a common test.

\begin{example}
Each student has one of four possible preference profiles, drawn independently: \\

\centerline{$\begin{array}{l l r}
A , B , C & w.p. \,\, &(1-\delta)/2 \\
A , C , B & w.p. \,\, & \delta/2 \\
B , A , C & w.p. \,\, & (1-\delta)/2 \\
B , C , A & w.p. \,\, &\delta/2 \\ \\
\end{array}$}

Schools have aligned preferences ex-post. The possibilities are the following: \\

\centerline{$\begin{array}{l l r}
1 , 2 , 3 & w.p. \,\, & (1-\epsilon)/2 \\
1 , 3 , 2 & w.p. \,\, & \epsilon/2 \\
2 , 1 , 3 & w.p. \,\, & (1-\epsilon)/2 \\
2 , 3 , 1 & w.p. \,\, & \epsilon/2 \\ \\
\end{array}$}
%\centerline{$\begin{array}{l l r}
%A \succ B \succ C & w.p. \,\, &(1-\delta)/2 \\
%A \succ C \succ B & w.p. \,\, & \delta/2 \\
%B \succ A \succ C & w.p. \,\, & (1-\delta)/2 \\
%B \succ C \succ A & w.p. \,\, &\delta/2 \\ \\
%\end{array}$}
%
%Schools have aligned preferences ex-post. The possibilities are the following: \\
%
%\centerline{$\begin{array}{l l r}
%1 \succ 2 \succ 3 & w.p. \,\, & (1-\epsilon)/2 \\
%1 \succ 3 \succ 2 & w.p. \,\, & \epsilon/2 \\
%2 \succ 1 \succ 3 & w.p. \,\, & (1-\epsilon)/2 \\
%2 \succ 3 \succ 1 & w.p. \,\, & \epsilon/2 \\ \\
%\end{array}$}

\end{example}
{\onehalfspacing
If all agents participate in an assortative match, schools $A$ and $B$ get their first, second, and third choices with probabilities $(\frac{1}{2}, \frac{1-\delta}{2}, \frac{\delta}{2})$ respectively. Students $1$ and $2$ get their first, second, and third choices with probabilities $\left( \frac{3}{4},\frac{1}{4},0\right) - \frac{\epsilon}{8}\left( 2 - \delta, 2 - 5 \delta + 3 \delta^2, -4 + 6 \delta - 3 \delta^2 \right)$.
%\left(\frac{3}{4} - \frac{\epsilon}{4}(1-\frac{\delta}{2}),\frac{1}{4}- \frac{\epsilon}{4} (1-\frac{5}{2} \delta + \frac{3}{2} \delta^2) ,\frac{\epsilon}{4}(2-3\delta + \frac{3}{2}\delta^2)\right)$

If only $(A,B,1,2)$ participate in an assortative match, then the associated match probabilities for schools $A$ and $B$ are $(\frac{1}{2}, \frac{1-\epsilon}{2},\frac{\epsilon}{2})$, and for students $1$ and $2$ are $\left( \frac{3}{4},\frac{1}{4},0\right) - \frac{\delta}{4} \left(0,1,-1 \right)$. %$(\frac{3}{4} \frac{1-\delta}{4}, \frac{\delta}{4})$.

All four of $A, B, 1, 2$ prefer the latter option if $\epsilon < \delta < 2 \epsilon (1- \frac{3}{2} \delta + \frac{3}{4} \delta^2)$.

}

\end{document}